\numberwithin{equation}{section}
\numberwithin{figure}{section}
\numberwithin{table}{section}
\newtheorem{lemma}{Lemma}
\newtheorem{proposition}[lemma]{Proposition}
\newcommand{\sign} {{\mathrm{sign} \, }}
\newcommand{\spot}{\marginpar{$\bullet$\kern30pt}}
\begin{document}
\bibliographystyle{plain}

\begin{center}
{\Large {\bf  Models for dependent extremes using stable
mixtures}}
\end{center}
\bigskip

\begin{center}
Running/short title: Models for dependent extremes
\end{center}
\bigskip

\begin{center}
{\large Anne-Laure\ Foug\`eres} \\
{\it \'Equipe Modal'X, Universit\'e Paris X - Nanterre}\\
 {\large John P. Nolan }\\
 { \it Department of Mathematics and Statistics,
American University} \\
{\large  Holger \ Rootz\'en} \\
 { \it Department
  of Mathematics, Chalmers University of Technology}
\end{center}
  \vskip 6mm

\vskip 2mm \noindent {\it Keywords:} Logistic distribution,
max-stable, multivariate extreme value distribution, pitting
corrosion, random effect, positive stable variables.

\vskip 2mm \noindent {\bf Abstract:} This paper unifies and extends
results on a class of multivariate Extreme Value (EV) models studied
by Hougaard, Crowder, and Tawn. In these models both unconditional
and conditional distributions are EV, and all lower-dimensional
marginals and maxima belong to the class. This leads to substantial
economies of understanding, analysis and prediction. One
interpretation of the models is as size mixtures of EV
distributions, where the mixing is by positive stable distributions.
A second interpretation is as exponential-stable location mixtures
(for Gumbel) or as power-stable scale mixtures (for non-Gumbel EV
distributions). A third interpretation is through a Peaks over
Thresholds model with a positive stable intensity. The mixing
variables are used as a modeling tool and for better understanding
and model checking. We study extreme value analogues of components
of variance models, and new time series, spatial, and continuous
parameter models for extreme values. The results are applied to data
from a pitting corrosion investigation.


\section{Introduction}
Multivariate models for extreme value data are attracting
substantial interest, see e.g. Kotz and Nadarajah (2000) and
Foug\`eres (2004). However, with the exception of Smith (2004) and
Heffernan and Tawn (2004), few applications involving more than two
or three dimensions have been reported. One main application area is
environmental extremes.  Dependence between extreme wind speeds and
rain fall can be important for reservoir safety (Anderson and
Nadarajah (1993), Ledford and Tawn (1996)), high mean water levels
occurring together with extreme waves may cause flooding (Bruun and
Tawn (1998), de Haan and de Ronde (1998)), and simultaneous high
water levels at different spatial locations pose risks for large
floods (Coles and Tawn (1991)). Another set of applications is in
economics where multivariate extreme value theory has been used to
model the risk that extreme fluctuations of several exchange rates
or of prices of several assets, such as stocks, occur together
(Mikosch (2004), Smith (2004), St\u{a}ric\u{a} (1999)).  A third
use, perhaps somewhat unlikely, is in the theory of rational choice
(McFadden (1978)). Below we will also consider a fourth problem,
analysis of pitting corrosion measurements (Kowaka (1994), Scarf and
Laycock (1994)).

The papers cited above all use  multivariate Extreme Value (EV)
distributions. The rationale is the ``extreme value argument'':
maxima of many individually small variables often have approximately
a (univariate or multivariate as the case may be) extreme value
distribution. However in  ``random effects'' situations this
argument becomes less clear. Suppose e.g. a number of groups each
has its own i.i.d variation but in addition each group is affected
by some overall random effect. Then, is it the unconditional
distributions which belong to the extreme value family, or is it the
conditional distribution, given the value of the random effect? In
many situations the extreme value argument seems equally compelling
for unconditional and conditional distributions. So, should one use
an EV model for the conditional distribution; or is it perhaps the
unconditional distributions which are extreme value?

In the present paper this problem is overcome by using models where
both conditional and unconditional distributions are EV. The models
have the further attractive properties that all lower-dimensional
marginals belong to the same class of models, and that maxima of all
kinds, e.g. over a number of ``groups'' with differing numbers of
elements, also have distributions which belong to the class.

The models are obtained by mixing EV distributions over a positive
stable distribution. They were first noted by Watson and Smith
(1985) and, in a survival analysis context, apparently independently
introduced by Hougaard (1986) and Crowder (1989). Further
interesting applications of such models were made in Crowder (1998).
The most general versions of these distributions were called the
asymmetric logistic distribution and the nested logistic
distribution by Tawn (1990) and McFadden (1978) and were further
studied in Coles and Tawn (1991). Crowder (1985) and Crowder and
Kimber (1997) contain some related material. However, we believe
that the full potential of these models is still far from being
realized. In this paper we have attempted to take three more steps
towards making them more widely useful.

The first step is to revisit the papers of Hougaard, Crowder and
Tawn, to collect and solidify the results in these papers. We
concentrated on two parts: the physical motivation for the models,
and a clear mathematical formulation of the general results.
The second step is to use the stable mixing variables not just as a
``trick'' to obtain multivariate distributions, but as a modeling
tool. Insights obtained from taking the mixing variable seriously
are new model checking tools, and better understanding of
identifiability of parameters and of the model in general.
The final important step is the realization that through suitable
choices of the mixing variables it is possible to obtain new
natural time series models, spatial models, and continuous
parameter models for extreme value data.  This  provides classes
of models for extreme value data which go beyond dimensions two
and three.

It is not immediately obvious from the forms of the asymmetric
logistic distribution and the nested logistic  distribution how to
simulate values from them, see e.g. Kotz and Nadarajah (1999,
Section 3.7). However the representation as stable mixtures makes
simulation straightforward. According to it, one can first simulate
the stable variables, using e.g. the method of Chambers {\it et al.}
(1976), and then simulate independent variables from the conditional
distribution given the stable variables, cf. Stephenson (2003). This
adds to the usefulness of the models.

Our results can be presented in two closely related ways: as mixture
models for Gumbel distributions, and as mixture models for the
general family of EV distributions. We first present the results for
Gumbel distributions. The Gumbel distribution has a special
importance. It occurs as the limit of maxima of most standard
distributions, specifically so for the normal distribution. In fact,
it is the only possible limit for the entire range of tail behavior
between polynomial decrease and (essentially) a finite endpoint.
Another reason is the approximate lack of memory property of the
locally exponential tails, which goes together with it. The Gumbel
distribution is known to fit well in many situations, e.g. for pit
corrosion measurements (Kowaka (1994)).

We present three motivations/interpretations of  the Gumbel models.
One is as an exponential-stable location mixture of independent
Gumbel distributions with the same scale parameter. A second
interpretation is as size mixtures of extreme value distributions,
where the mixing is by positive stable distributions. A third
interpretation is through a Peaks over Thresholds (PoT) model with a
positive stable random intensity.

We also develop the models in the general EV setting. In it, two out
of three physical motivations for the model, as size mixtures and as
maxima in a Peaks over Thresholds  model with a doubly stochastic
Poisson number of large values are the same as for the Gumbel model.
The counterpart to the remaining Gumbel interpretation, as a
location parameter mixture, is that the multivariate EV
distributions are obtained as scale mixtures with an accompanying
location change which keeps the endpoints of the distributions
fixed.

The basic motivations and explanations of the models for the Gumbel
case are collected in Section \ref{sect:mixtures} below. In Section
\ref{sect:newclasses} we rederive and remotivate the asymmetric and
nested logistic multivariate Gumbel distributions and introduce new
classes of multivariate Gumbel models for time series, spatial, and
continuous parameter applications. In Section \ref{sect:pitting} we
discuss estimation in the random effects model and in a hidden MA(1)
model. These models are then used to analyze a data set coming from
an investigation of
pitting corrosion on the lower hemflange of a car door. The section
also uses new model checking tools. Properties of the
exponential-stable mixing distributions are given in Section
\ref{sect:properties}. Section \ref{sect:evmixtures} translates the
Gumbel results and models from Sections \ref{sect:mixtures},
\ref{sect:newclasses}, and \ref{sect:pitting} to the general EV
family. Section \ref{sect:discussion} contains a small concluding
discussion.


\section{Mixtures of Gumbel distributions}\label{sect:mixtures}

In this section we revisit the physical
motivations/interpretations for the models, and add one of our
own - as a ``size mixture''. We present the motivations in a new
setting which seems particularly illustrative. This situation is
a standard type of pitting corrosion measurement. In it a number
of metal test specimens, e.g. from the body of a car, are divided
up into subareas, called test areas, and the deepest corrosion
pit in each of the test areas is measured. The presumption is
that there may be an extra variation between specimens (due to
position) which is not present between test areas from the same
specimen. In Section 4 we analyze such an experiment. One 
cause of extra variation in this experiment was the
randomness in the proportion of the surface which was covered by
corrosion-preventing 
coating. There undoubtedly were
other causes, such as differences in exposure to dirt and salt.
However, for the present purposes of illustration we mainly talk
about the variation in the size of the surface cover.

We introduce the ideas in the one-dimensional case. The motivations,
however, extend directly to the new multivariate models which are
treated in subsequent sections and which are the main interest of
this paper.

The mathematical basis  is the following
observation. Let $S$ be a standard positive $\alpha$-stable variable,
specified by its Laplace transform
\begin{equation}
\label{stable}
E(e^{-tS})=e^{-t^\alpha}, \;\;\;\; t \geq 0,
\end{equation}
 where necessarily $\alpha \in (0,1]$. (When $\alpha=1$, $S$ is taken to
 be identically 1, see the discussion in Section 5.) Further, let the
random variable $X$ be Gumbel distributed conditionally on $S$,
\begin{equation}
\label{xconditional}
P(X \leq x|S) = \exp(-Se^{-\frac{x-\mu}{\sigma}})
= \exp(-e^{-\frac{x-(\mu+\sigma \log(S))}{\sigma}}).
\end{equation}
Then by (\ref{stable}),
\begin{equation}
\label{xunconditional}
P(X \leq x)= \exp(-(e^{-\frac{x-\mu}{\sigma}})^\alpha)
= \exp(-e^{-\frac{x-\mu}{\sigma/\alpha}}).
\end{equation}
Hence unconditionally $X$ also has a Gumbel distribution, but the
mixing increases the scale parameter $\sigma$ of the conditional
Gumbel distribution to $\sigma/\alpha$.

We will sometimes use the terminology that the distribution of $X$ is
{\em directed} by the stable variable $S$. Let
$G \sim$ Gumbel$(\mu,\sigma)$ mean
that the random variable $G$ has the distribution function (d.f.)
$\exp(-e^{-\frac{x-\mu}{\sigma}})$. If $S$ has the distribution
specified by (\ref{stable}), the variable $M = \mu + \sigma \log(S)$
will be called exponential-stable with parameters $\alpha, \mu,$ and
$\sigma$. The symbols $M \sim$ ExpS$(\alpha,\mu, \sigma)$ will be used to
denote such a distribution.

We will give equation (\ref{xunconditional}) three different
interpretations. The first one was used by Crowder (1989) in the
context of a ``first order components of variance'' setting (cf also
Hougaard (1986)). The third one was put forth by Tawn (1990), and
discussed in a wind storm setting.

\vspace*{2mm}

(i) {\em Gumbel distribution as a location mixture of Gumbel
distributions:} If $G$ and $M$ are independent and $G \sim$
Gumbel$(\mu_1,\sigma)$ and $M \sim$ ExpS$(\alpha, \mu_2,\sigma)$
then $G+M \sim$ Gumbel$(\mu_1+\mu_2,\sigma/\alpha)$. This follows by
replacing $\mu$ in (\ref{xconditional}) and (\ref{xunconditional})
by $\mu_1+\mu_2$.


For the pitting corrosion measurements, the interpretation would be
that the maximal pit depth in a test area had a Gumbel distribution
with a random location parameter $\mu_1 + M$. The value of $M$ would
depend on the extent to which the test area  was exposed to
corrosion.

Briefly going beyond the one-dimensional model, it would be natural to
assume that different test areas would have different $G$-s but that
the variable M would be the same for all test areas on the same
specimen, and different for different test specimens.  A further remark
is that in this model it is not possible to separate $\mu_1$ and
$\mu_2$. However, the parameters can be made identifiable by assuming
that either $\mu_1$ or $\mu_2$ is zero.

\vspace*{2mm}

(ii) {\em Gumbel distribution as a size mixture of Gumbel
distributions:} If the maximum over a unit block has the Gumbel
d.f. $\exp(-e^{-\frac{x-\mu_1}{\sigma}})$ and blocks are independent
then the maximum over $n$ blocks, or equivalently over one block of
size $n$, has the d.f.
\begin{equation}\label{eq:maxn}
(\exp(-e^{-\frac{x-\mu_1}{\sigma}}))^n = \exp(-ne^{-\frac{x-\mu_1}{\sigma}}).
\end{equation}
In this equation it makes sense to think of non-integer block
sizes and random block sizes. In particular, it makes sense to
replace $n$ by $Se^{\mu_2/\sigma}$ in (\ref{eq:maxn}) to obtain
the d.f. $ \exp(-Se^{\mu_2/\sigma}e^{-\frac{x-\mu_1}{\sigma}})$.
It then again follows from (\ref{stable}) that the unconditional
distribution is Gumbel$(\mu_1+\mu_2,\sigma/\alpha)$. Thus the
Gumbel$(\mu_1 + \mu_2,\sigma/\alpha)$ distribution is obtained as
a ``size mixture'' of Gumbel$(\mu_1,\sigma)$ distributions, by
using the stable size distribution $S e^{\mu_2/\sigma}$. As
before, to make the model identifiable, one should assume that
either $\mu_1$ or $\mu_2$ is zero.

\vspace*{2mm}

The interpretation in the corrosion example is that $S
e^{\mu_2/\sigma}$ is the ``size'' of the part of the test area which
is exposed to corrosion. This size of course cannot be negative.
Further it could reasonably be expected to be determined as the sum
of many individually negligible contributions. Suitably interpreted,
these two properties together characterize the positive stable
distributions.

Next, it is well known that maxima of i.i.d. variables
asymptotically have a Gumbel distribution if the point process of
large values asymptotically is a Poisson process. More precisely, if
$\{Y_{n,i}\}$ are suitably linearly renormalized values of an i.i.d.
sequence $\{Y_i\}$ and $t_i=i/n$, then the point process $\sum_i
\epsilon_{(t_i, Y_{n,i})}$ tends to a Poisson process in the plane
with intensity $d\Lambda = dt \times d(e^{-(x-\mu)/\sigma})$ if and
only if the probability that $\max_{1\leq i \leq n} Y_{n,i} \leq x$
tends to $\exp(-e^{-\frac{x-\mu}{\sigma}})$, see e.g. Leadbetter et
al. (1983).
Our third
interpretation of the Gumbel mixture model is obtained by replacing
the constant intensity in the point process by a stable one.

\vspace*{2mm} {\it (iii) Gumbel distribution as the maximum of a
conditionally Poisson point process:} Suppose
$X$ is the maximum y-coordinate of a point process in $(0,1]\times R$
such that conditionally on a stable variable $S$ the point process
is Poisson with intensity $d\Lambda = S e^{\mu_2/\sigma} dt \times
d(e^{-(x-\mu_1)/\sigma})$. Then, by the same argument as above,
conditionally on $S$, the variable $X$ has d.f. $\exp(-S e^{\mu_2/\sigma}
e^{-\frac{x-\mu_1}{\sigma}})$, and as for (\ref{xunconditional}), it
follows that the unconditional distribution of $X$ is
Gumbel$(\mu_1+\mu_2,\sigma/\alpha)$.

\vspace*{2mm}

In the corrosion example,  the points in the point process
correspond to pit depths on the surface of the test area. The random
intensity $S e^{\mu_2/\sigma}$ then would describe an extra
stochastic variation in intensity of pits from test area to test
area. Again this has to be positive and perhaps is obtained as the
sum of many individually negligible influences, and hence approximately 
positive stable.  As above, one of $\mu_1$ or $\mu_2$ should be
assumed to be zero for identifiability.

It may also be noted that in some situations it may be possible to
use PoT observations, i.e. to actually observe the underlying  large
values, say all deep corrosion pits in each test area. Such
measurements could also be handled within the present framework, by
substituting the likelihoods in this paper with the corresponding
point process (or PoT) likelihoods. However, we will not pursue this
further here.

By way of comment, the logarithm of  the positive stable
distribution which occurs in the location mixture (i) has finite
moments of all orders. In contrast, the positive stable variables
themselves have infinite means. This, however, seems largely
irrelevant both for the mathematics of the models and for modelling.

\section{New classes of Gumbel processes}\label{sect:newclasses}
In this section we introduce a number of concrete Gumbel models
directed by linear stable processes: a random effects model, time
series models with directing stable linear processes, and a spatial
model with a stable moving average as directing process. We also
consider a hierarchical setup and continuous parameter models.

However, to provide a solid foundation for this paper and future
developments, we first give a precise mathematical formulation of
results of Tawn (1990). This shows the exact relations between the
three interpretations given in Section \ref{sect:mixtures} in a
general setting, and slightly generalizes (a restriction on the size
of the set $A$ is removed) Tawn's main result.

Let $T$ and $A$ be discrete index sets, where in addition $T$ is
assumed to be finite. Further let $\{c_{t,a}\}$ be non-negative
constants and let $\{S_a, a \in A \}$ be independent positive
$\alpha$-stable variables with distribution specified by
(\ref{stable}). We assume without further comment that $\sum_{a \in
A} c_{t,a} S_a$ converges almost surely for each $t$.

\begin{proposition}
Consider the following three models: \\
\noindent (i) $\displaystyle{X_t = G_t + \sigma_t \log(\sum_{a \in
A}c_{t,a}S_a)}, \; t\in T,$ where $G_t \sim$Gumbel$(\mu_t,
\sigma_t)$, and the $G_t$-s and $S_a$-s all are mutually
independent.

\vspace*{2mm}

\noindent (ii)  ${X_t, t \in T}$ are conditionally independent
random variables given ${S_a, a \in A}$, with marginal distributions
\begin{equation}
\label{generalconditional}
P(X_t \leq x_t | S_a, a \in A) = \exp\left(-(\sum_{a \in A}c_{t,a}S_a
) e^{-\frac{x_t-\mu_t}{\sigma_t}} \right), \;\;\;\; t \in T.
\end{equation}

\vspace*{2mm}

\noindent (iii)  For $t \in T$, $X_t$ is the maximum y-coordinate of
a point process in $(0,1]\times R$ such that conditionally on $S_a,
a \in A$ the point process is independent and Poisson with intensity
$\left(\sum_{a \in A}c_{t,a}S_a \right)dt \times
d(e^{-(x-\mu_t)/\sigma_t})$.

\vspace*{2mm}

Then all three models are the same, i.e. they have the same finite dimensional distributions:
\begin{equation}
\label{general} P(X_t \leq x_t, t \in T) = \prod_{a \in A}
\exp\left(-(\sum_{t \in T}c_{t,a} e^{-\frac{x_t-\mu_t}
{\sigma_t}})^\alpha\right),
\end{equation}
and this distribution is a multivariate extreme value distribution.
\end{proposition}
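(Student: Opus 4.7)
The plan is to derive (\ref{general}) from model (ii) by conditioning and the Laplace transform (\ref{stable}), then reduce (i) and (iii) to (ii), and finally verify that the resulting law is a multivariate EV distribution by exhibiting max-stability. The only analytic input is (\ref{stable}).

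First I would dispatch (ii). Conditioning on $\{S_a, a \in A\}$ and using the conditional independence built into (\ref{generalconditional}),
$$P(X_t \leq x_t,\, t \in T \mid S_a, a \in A) = \exp\!\Bigl(-\sum_{t \in T}\sum_{a \in A} c_{t,a} S_a \, e^{-(x_t - \mu_t)/\sigma_t}\Bigr) = \exp\!\Bigl(-\sum_{a \in A} u_a S_a\Bigr),$$
where $u_a := \sum_{t \in T} c_{t,a} e^{-(x_t - \mu_t)/\sigma_t}$ and the reordering of the double sum is legitimate because $T$ is finite. Taking expectations, using independence of the $S_a$ together with (\ref{stable}), gives
$$P(X_t \leq x_t,\, t \in T) = \prod_{a \in A} E\,e^{-u_a S_a} = \prod_{a \in A} e^{-u_a^\alpha},$$
which is (\ref{general}).

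Next I would reduce (i) and (iii) to (ii). For (i), adding $\sigma_t \log(\sum_a c_{t,a} S_a)$ to $G_t \sim \mathrm{Gumbel}(\mu_t, \sigma_t)$ is a location shift, so conditionally on $\{S_a\}$ the variable $X_t$ has d.f.\ (\ref{generalconditional}); mutual independence of the $G_t$ then yields conditional independence of the $X_t$ across $t$. For (iii), the Poisson-maximum fact recalled just before the proposition says that a Poisson process on $(0,1] \times R$ with intensity $\lambda\, dt \times d(e^{-(x-\mu_t)/\sigma_t})$ has maximum $y$-coordinate with d.f.\ $\exp(-\lambda e^{-(x-\mu_t)/\sigma_t})$; applying this conditionally with $\lambda = \sum_a c_{t,a} S_a$, together with the assumed conditional independence of the point processes, reproduces (\ref{generalconditional}) and hence model (ii).

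Finally, to confirm that the law $F$ in (\ref{general}) is a multivariate EV distribution, I would verify max-stability directly: replacing each $x_t$ by $\tilde x_t := x_t - (\sigma_t/\alpha)\log n$ scales each $u_a$ by $n^{1/\alpha}$, so
$$F(\tilde x) = \prod_{a \in A} \exp\bigl(-n\, u_a^\alpha\bigr) = F(x)^n,$$
and the one-dimensional margins, obtained by sending all but one $x_s$ to $+\infty$, are Gumbel. The main obstacles are purely bookkeeping: the swap of expectation with the (possibly infinite) product over $A$ is justified by dominated convergence on partial sums $A_n \uparrow A$ using the a.s.\ convergence of $\sum_a c_{t,a} S_a$ assumed in the setup, and the conditional-marginal identifications in (i) and (iii) need only the location-shift property of the Gumbel family and the stated Poisson-maximum fact. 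The conceptual core is a single application of the stable Laplace transform per $a \in A$.
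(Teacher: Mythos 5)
Your proposal is correct and follows essentially the same route as the paper: reduce (i) and (iii) to (ii) via the Gumbel location-shift and the Poisson-maximum argument, then apply the stable Laplace transform (\ref{stable}) once per $a\in A$ to get (\ref{general}), and finish with max-stability (which the paper simply declares obvious, whereas you verify it explicitly with the shift $x_t \mapsto x_t - (\sigma_t/\alpha)\log n$). No gaps; the extra care you take with the infinite product over $A$ and the explicit max-stability check are sound refinements of the paper's terser argument.
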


\begin{proof}By the form of the Gumbel distribution function, (i)
implies that (ii) holds. Similarly, by the same argument as for
(iii) of Section 2 above, it follows that (iii) of the proposition
implies (ii). Further, that (ii) implies (\ref{general}) follows
immediately from (\ref{stable}) since, by independence of the
$\{S_a\}$,
\begin{eqnarray*} P(X_t \leq
x_t, t \in T) &=& E\left(\exp(-\sum_{t \in T}\sum_{a \in
A}c_{t,a}S_a e^{-\frac{x_t-\mu_t}{\sigma_t}})\right)\\
 &=&\prod_{a \in
A} E\left(\exp[-S_a(\sum_{t \in T}c_{t,a}
e^{-\frac{x_t-\mu_t}{\sigma_t}})]\right).
\end{eqnarray*}
It is obvious that the distribution (\ref{general}) is max-stable,
and hence an EV distribution.
\end{proof}


As discussed in the introduction, a class of multivariate extreme
value mixture models is most useful if (a) both unconditional and
conditional distributions are extreme value, (b) lower-dimensional
marginal distributions also belong to the class, and (c) maxima over
any subsets have joint distributions which belong to the class. Now,
(a) is a part of Proposition 1. Further, if one sets some of the
$x_t$ in (\ref{general}) equal to infinity the corresponding terms
in the sum in the right hand side vanishes, but the expression still
is of the same general form, and hence the model (\ref{general})
satisfies the requirement (b).

The model also satisfies (c) if one imposes the extra restriction
that all the scale parameters have the same value, i.e. that
$\sigma_t = \sigma$, for $t \in T$. For the marginal distribution
of a maximum this is because if $T_1 \subset T$ then
\begin{equation}
\label{maxstable} P(\max_{t \in T_1} X_t \leq x) = \prod_{a \in A}
\exp\left(-(\sum_{t \in T_1} c_{t,a}
e^{\frac{\mu_t}{\sigma}})^\alpha
e^{-\frac{x}{\sigma/\alpha}}\right),
\end{equation}
or equivalently
$$ \max_{t \in T_1} X_t \sim \mathrm{Gumbel}\left( (\sigma/\alpha)
\log(\sum_{a \in A}(\sum_{t \in
T_1}c_{t,a}e^{\mu_t/\sigma})^\alpha), \sigma/\alpha\right).
$$
In particular, by letting $T_1$ be a one point set we see that in
this case marginals are Gumbel distributed,
$$ X_t \sim \mathrm{Gumbel}\left( (\sigma/\alpha) \log(\sum_{a \in
 A}(c_{t,a}e^{\mu_t/\sigma})^\alpha), \sigma/\alpha\right).
$$

Moreover, joint distributions of
maxima also  belong to the class (\ref{general}) of distributions.
E.g. let $T_1$ and $T_2$ be disjoint subsets of $T$ and set
$c_{T_i,a} = \sum_{t \in T_i}c_{t,a} \exp{(\mu_t/\sigma}) $, for
$i=1,2$.
Then, as can be seen from (\ref{generalconditional}) or
(\ref{general}),
\begin{equation*}
 P(\max_{t \in T_1} X_t \leq x_1, \max_{t \in T_2} X_t \leq x_2) = \prod_{a \in A}
\exp\left(-(c_{T_1,a} e^{-\frac{x_1}{\sigma}}+c_{T_2,a}
e^{-\frac{x_2}{\sigma}})^\alpha \right),
\end{equation*}
which has the form (\ref{general}). Similar but more complicated
formulas hold when more subsets are involved and when the subsets
can overlap.

These two properties are touched upon by Crowder (1989) in a
less general situation, and also by Tawn (1990).

Conditions (i) - (iii) in Proposition 1  correspond to the three
``physical'' interpretations in Section 2. We now turn to a number
of specific models. Which interpretation is most relevant of course
varies from model to model. E.g the first model below is the
standard logistic model for extreme value data, but with the
interpretation as a random effects model. We will use it on a pit
corrosion example, where perhaps the interpretation (ii) is most
compelling. However, to streamline presentation, we will for the
rest of this section formulate the models as in (i), but of course
could equally well have used (ii) or (iii).

\vspace*{2mm} \noindent {\em Example: A one-way random effects
model.} This is the model
\begin{equation}
\label{eq.oneway.cdf}
X_{i,j} = \mu + \tau_i + G_{i,j}, \;\;\;\;\;\;\ 1 \leq i \leq m, \; 1 \leq j \leq n_i
\end{equation}
with $\mu$ a constant, $\tau_i \sim$ExpS$(\alpha, 0, \sigma)$, $G_{i,j}
\sim$Gumbel$(0,\sigma)$ and all variables independent.

Setting  $T=\{(i,j); \ 1 \leq i \leq m, \; 1 \leq j \leq n_i \},$  $
A = \{1, 2, \dots, m \}$ and $c_{(i,j),k} = 1_{\{i=k\}}$, this is a
special case of the situation in Proposition 1 and we directly get
the distribution function
\begin{equation}\label{one.way.loglik}
P(X_{i,j} \leq x_{i,j}, \ 1 \leq i \leq m, \; 1 \leq j \leq n_i)
= \prod_{i=1}^m \exp(- (\sum_{j=1}^{n_i} e^{-\frac{x_{i,j}-\mu}{\sigma}})^\alpha).
\end{equation}

According to Proposition 1 and the subsequent remarks this is a
multivariate EV distribution, and explicit formulas are directly
available for the distribution of all kinds of unconditional and
conditional maxima. In particular the marginal distributions are
Gumbel$(\mu, \sigma^*)$ for $\sigma^* = \sigma/\alpha$. \hfill
{$\Box$}

\hspace*{2mm}

This model can be extended to higher order random effects models
which are ``linear on an exponential scale''. It can also be
natural, for instance in a ``repeated measurements'' setting, to
let $\mu$ be a function of $t$, perhaps depending on the values of
known covariates, as done in Crowder (1989, 1998) or Hougaard
(1986). Note however that in the context of repeated data, say
$(Y_1, \cdots, Y_p)$, the set $T$ from Proposition 1 has to be
$T=\{(i,j); 1\leq i \leq p, 1\leq j \leq n_i\}$, whereas we allow
more general $T$'s.

We next turn to time series models.  A linear stationary positive
stable process may be obtained as $H_t=\sum_{i=-\infty}^\infty b_i
S_{t-i}$, where the $S_i$ have distribution (\ref{stable}), the
$b_i$ are nonnegative constants, and the sum converges in distribution if $\sum
b_i^\alpha < \infty$. Defining
\begin{equation}
\label{linear}
X_t = \mu_t + \sigma \log
(H_t) + G_t,
\end{equation}
for some constants $\mu_t$ gives a Gumbel time series model. In
particular (\ref{linear}) includes hidden ARMA models. We next
look closer at the two simplest cases of this.

\vspace*{2mm}

 {\em Example: A hidden MA-process model.} Suppose $H_t= b_0 S_t+b_1
S_{t-1}+ \dots +b_q S_{t-q}$ and $X_t$ is defined by (\ref{linear}),
where the $S_i$ have distribution (\ref{stable}), $G_t
\sim$Gumbel$(0, \sigma)$ and all variables are mutually independent.
Then, by Proposition 1 with $T=\{1, \dots, n\}$, and $A=\{0, \pm 1,
\dots \}$,

\begin{equation}
\label{eq:ma}
P(X_t \leq x_t, \; 1 \leq t \leq n) = \prod_{k=1-q}^n
\exp(-(\sum_{t=1
 \vee k}^{n \wedge (k+q)}
b_{t-k}e^{-\frac{x_t-\mu_t}{\sigma}})^\alpha).
\end{equation}
\hfill $\Box$

\vspace*{2mm}

 {\em Example: A hidden AR-process model.} For $0<\rho<1$ define the
positive stable AR-process $H_t$ by $H_t=\sum_{i=0}^\infty \rho^i
S_{t-i}$, and let $X_t$ be given by (\ref{linear}), with  the $S_i$
and $G_t$ as before.
From the definition of  $H_t$,
\begin{eqnarray}
\label{AR}
H_0 &=& \sum_{i=0}^\infty \rho^i S_{-i} \\
H_1 &=& \rho H_0 + S_1 \nonumber\\
&\vdots& \nonumber \\
H_n &=& \rho^n H_0 + \rho^{n-1}S_1 + \cdots +\rho S_{n-1} + S_n,
\nonumber
\end{eqnarray}
and in addition, by (\ref{stable}) $H_0$ has the same distribution as
$$
(\sum_{i=0}^\infty \rho^{i\alpha})^{1/\alpha} S_0 = (1-\rho^\alpha)^{-1/\alpha}S_0,
$$ and is independent of $S_1, \dots, S_n$. Thus, the model is again of
the form considered in Proposition 1, with $T=\{0, \dots, n\}$,
$A=\{0, \pm 1, \dots \}$ and $c_{t,0}= \rho^t
(1-\rho^\alpha)^{-1/\alpha}, \; c_{t,a} = \rho^{t-a}$ for $ a=1,
\dots, t$ and $c_{t,a} = 0$ otherwise. Thus by Proposition 1 the
distribution function is
$$
P(X_t \leq x_t, \; 0 \leq t \leq n)
= \exp[-(1-\rho^\alpha)^{-1}(\sum_{t=0}^n \rho^t e^{-\frac{x_t-\mu_t}{\sigma}})^\alpha]
 \prod_{i=1}^n \exp(-(\sum_{t=i}^n \rho^{t-i}e^{-\frac{x_t-\mu_t}{\sigma}})^\alpha).
$$
\hfill $\Box$

\hspace*{2mm}

In the next example we consider models on the integer lattice in the
plane. Let $n_{(i,j)}$ be a system of neighborhoods with the standard
properties $(i,j) \in n_{(i,j)}$ and $(k,l) \in n_{(i,j)}
\Leftrightarrow (i,j) \in n_{(k,l)}$. A simple example is when the
neighbors are the four closest points and the point itself, i.e. when
$n_{(i,j)} = \{(i,j),(i-1,j),(i+1,j),(i,j-1),(i,j+1)\}$.

\vspace*{2mm} {\em Example: A spatial hidden MA-process model.} Let
$\{S_{i,j}; - \infty < i,j < \infty \}$ be independent standard
positive $\alpha$-stable variables and set $H_{i,j}=\sum_{(k,l)
\in n_{(i,j)}} \delta S_{k,l}$ where $\delta$ is a positive constant.
Put
$$
X_{i,j} = \mu_{i,j} +\sigma \log(H_{i,j}) + G_{i,j}, \;\;\;\; 1 \leq i,j \leq n,
$$ where the $G_{i,j}$ are mutually independent and independent of the
$S_{i,j}$, and $G_{i,j} \sim$ Gumbel$(0,\sigma)$. Again this is of the
form considered in Proposition 1, now with $c_{(i,j),(k,l)} = \delta$ if $(i,j)
\in n_{(k,l)}$ and zero otherwise.  To write down the joint
distribution function it is convenient to use the notation
$\bar{n}_{(k,l)} = {n}_{(k,l)} \cap \{(i,j); \;\; 1 \leq i,j \leq
n \}$. We then get that

$$
P(X_{i,j} \leq x_{i,j}; \; 1 \leq i,j \leq n)
= \prod_{(k,l)} \exp(-\delta^\alpha (\sum_{(i,j) \in \bar{n}_{(k,l)}}
e^{-\frac{x_{i,j} - \mu_{i,j}}{\sigma}})^\alpha).$$
\hfill $\Box$

\hspace*{2mm}

We now turn to a situation not covered by Proposition 1, the
so-called nested logistic model of McFadden (see
Tawn (1990)).

{\em Example: A two-layer hierarchical model.}
Consider  the model
$$ X_{i,j,k} = \mu + \tau_i + \eta_{i,j} + G_{i,j,k}, \;\;\;\;\;\;\ 1
\leq i \leq m, \; 1 \leq j \leq n_i, \; 1 \leq k \leq r_{i,j},
$$ with $\mu$ a constant, $\tau_i \sim $ExpS$(\beta, 0, \sigma/\alpha)^{1/\alpha}$,
$\eta_{i,j} \sim $ExpS$(\beta, 0,\sigma)$, $G_{i,j,k}
\sim$Gumbel$(0,\sigma)$, and all variables independent. By repeated
conditioning we obtain, after some calculations similar to the proof
of Proposition 1,
$$ P(X_{i,j,k} \leq x_{i,j,k}, \ 1 \leq i \leq m, \; 1 \leq j \leq n_i,
\; 1 \leq k \leq r_{i,j}) $$
$$
= \prod_{i=1}^m \exp[- \{\sum_{j=1}^{n_i}
(\sum_{k=1}^{r_{i,j}} e^{-\frac{x_{i,j,k}-\mu}{\sigma}})^\alpha\}^\beta].
$$
\hfill $\Box$

\hspace*{2mm}

There also are continuous parameter versions of Proposition 1. Let
$\{S_j(\mathbf{s}); \mathbf{s} \in R^k\}$ be independently scattered
positive stable noise (see Samorodnitsky and Taqqu (1994, Chapter 3)).  We
assume that the noise is standardized, so that for nonnegative functions $f \in
L_\alpha$,
\begin{equation}
\label{stablecont}
E[\exp\{-\int_{-\infty}^\infty f(\mathbf{s}) S_j(d\mathbf{s})\}] =
\exp(-\int_{-\infty}^\infty f(\mathbf{s})^\alpha d\mathbf{s}).
\end{equation}
In the sequel we will without comment assume that functions
$f(\cdot)$ are such that integrals converge, and integrals are
taken to be over $R^k$.

\begin{proposition}
Suppose that there are nonnegative functions $f_j(\mathbf{t},\mathbf{s})$ with
$\mathbf{t} \in R^\ell$,  $ \mathbf{s} \in R^k$ such that
$$ X_\mathbf{t} = G_\mathbf{t} + \sigma_\mathbf{t} \log(\sum_{j=1}^m \int
f_j(\mathbf{t},\mathbf{s}) S_j(d\mathbf{s})), \;\;\;\; \mathbf{t} =
\mathbf{t}_1, \dots, \mathbf{t}_n,
$$ where $G_t \sim$Gumbel$(\mu, \sigma_t)$, and all variables are
mutually independent. Then
\begin{equation}
\label{continuous} P(X_{\mathbf{t}_i} \leq x_{\mathbf{t}_i}; \; i =
1, \dots, n) =\prod_{j=1}^m \exp(- \int (\sum_{i=1}^n
f_j(\mathbf{t}_i,\mathbf{s})e^{-\frac{x_{\mathbf{t}_i} -
\mu_{\mathbf{t}_i}}{\sigma_{\mathbf{t}_i}}})^\alpha d\mathbf{s}).
\end{equation}
\end{proposition}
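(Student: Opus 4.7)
The plan is to mimic the proof of Proposition~1 almost verbatim, with the discrete sum $\sum_{a \in A} c_{t,a} S_a$ replaced by the stable stochastic integral $\sum_{j=1}^m \int f_j(\mathbf{t},\mathbf{s}) S_j(d\mathbf{s})$, and the defining Laplace transform (\ref{stable}) replaced by its continuous analogue (\ref{stablecont}). Write $H_{\mathbf{t}} = \sum_{j=1}^m \int f_j(\mathbf{t},\mathbf{s}) S_j(d\mathbf{s})$, so that $X_{\mathbf{t}} = G_{\mathbf{t}} + \sigma_{\mathbf{t}} \log H_{\mathbf{t}}$.

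First, I would condition on the entire collection $\{S_j(\cdot)\}_{j=1}^m$. Since $H_{\mathbf{t}}$ is then deterministic, and since the $G_{\mathbf{t}_i}$ are mutually independent and independent of the noises, the conditional law of $X_{\mathbf{t}_i}$ reduces to a Gumbel shifted by $\sigma_{\mathbf{t}_i}\log H_{\mathbf{t}_i}$, giving
\begin{equation*}
P\bigl(X_{\mathbf{t}_i} \leq x_{\mathbf{t}_i},\ i=1,\dots,n \,\bigm|\, S_1,\dots,S_m\bigr)
= \exp\Bigl(-\sum_{i=1}^n H_{\mathbf{t}_i}\, e^{-(x_{\mathbf{t}_i}-\mu_{\mathbf{t}_i})/\sigma_{\mathbf{t}_i}}\Bigr),
\end{equation*}
exactly in analogy with step (i)$\Rightarrow$(ii) in Proposition~1.

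Next, I would substitute the definition of $H_{\mathbf{t}_i}$ and interchange the finite sum over $i$ with the stable integrals (justified because the sum is finite). Setting
\begin{equation*}
g_j(\mathbf{s}) \;=\; \sum_{i=1}^n f_j(\mathbf{t}_i,\mathbf{s})\, e^{-(x_{\mathbf{t}_i}-\mu_{\mathbf{t}_i})/\sigma_{\mathbf{t}_i}},
\end{equation*}
the conditional probability becomes $\exp\bigl(-\sum_{j=1}^m \int g_j(\mathbf{s})\, S_j(d\mathbf{s})\bigr)$. Taking expectations and using the assumed independence of the $m$ stable noise measures $S_1,\dots,S_m$ factors this into a product of Laplace functionals, and each factor is evaluated by (\ref{stablecont}) to yield $\exp(-\int g_j(\mathbf{s})^\alpha\, d\mathbf{s})$. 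The product of these is precisely the right-hand side of (\ref{continuous}).

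The only real issue — and what I would flag as the main obstacle — is the measure-theoretic bookkeeping: one must verify that each $g_j$ lies in $L^\alpha$ so that (\ref{stablecont}) applies (nonnegativity is clear since the $f_j$ are nonnegative and the exponentials positive, and $L^\alpha$ membership follows from the standing assumption that $\int f_j(\mathbf{t}_i,\mathbf{s})\, S_j(d\mathbf{s})$ is well-defined for each $i,j$, together with the elementary inequality $(\sum_i a_i)^\alpha \leq \sum_i a_i^\alpha$ when $\alpha \in (0,1]$). Once that is in place, the conditioning argument and the factorization via (\ref{stablecont}) are entirely routine and parallel the discrete proof.
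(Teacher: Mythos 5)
Your proposal is correct and is exactly the argument the paper intends: the authors simply state that Proposition 2 ``follows from (\ref{stablecont}) in the same way as Proposition 1 follows from (\ref{stable})'', which is precisely your conditioning-then-Laplace-functional route, with the $L^\alpha$ issue absorbed into the paper's standing assumption that all such integrals converge. Nothing further is needed.
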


The proof follows from (\ref{stablecont}) in the same way as
Proposition 1 follows from (\ref{stable}). The interpretations (ii),
as size mixtures, and (iii) as a random Poisson intensity could
equally well have been used as assumptions. However, this we leave to
the reader.

Proposition 2 gives a natural model for environmental extremes, such
as yearly maximum wind speeds or water levels, at irregularly located
measuring stations. E.g. one could assume years to be independent and
obtain a simple isotropic model for one year by choosing $k=\ell=2,
\;\; m=1$ and $f_1(\mathbf{t},\mathbf{s})
=\exp(-d|\mathbf{t}-\mathbf{s}|^\beta)$, for some constants $d, \beta
>0$. One extension to non-isotropic situations is by letting $D$ be a
diagonal matrix with positive diagonal elements and taking
$f_1(\mathbf{t},\mathbf{s})
=\exp(-(\mathbf{t}-\mathbf{s})^tD(\mathbf{t}-\mathbf{s})\beta)$.
(Formally the entire distribution function for $n$ years is also of
the form (\ref{continuous}), as can be seen by taking $\ell=3, \;
m=n$ and letting the different $S_j$ correspond to different years.)
It is possible to derive recursion formulas for the densities of
these models in a similar but more complicated way as for the random
effects model. If the number of measuring stations is not too large,
these expressions may be numerically tractable. However, we will not
investigate this further in this paper.

\section{Data analysis}\label{sect:pitting}

In this section we illustrate the random effects model and the
hidden MA(1) model from Section \ref{sect:newclasses} by using them
to analyze a set of pit corrosion measurements. As preliminaries we
first discuss maximum likelihood estimation in the two models.

\subsection{Estimation in the random effects model}\label{subsect:onewayestimation}

Let $0 < \sigma < \sigma^*$, $-\infty < \mu^* < \infty$, so
$\alpha := \sigma/\sigma^* \in (0,1)$.
Assume a data set $\mathbf{X}$ that comes from $m$ groups,
\begin{eqnarray}
\textrm{group 1}: & X_{1,1},X_{1,2},\ldots,X_{1,n_1} \nonumber \\
\textrm{group 2}: & X_{2,1},X_{2,2},\ldots,X_{2,n_2} \label{eq:data.set} \\
                  & \hspace{1cm} \vdots  \nonumber \\
\textrm{group m}: & X_{m,1},X_{m,2},\ldots,X_{m,n_m}.  \nonumber
\end{eqnarray}
The groups are assumed to be independent and the $i^\mathrm{th}$ group
comes from a Gumbel$(0,\sigma)$ distribution, where the location
parameter $\mu_i$ for group $i$ is drawn from an ExpS$(\alpha=\sigma/\sigma^*,
\mu^*,\sigma)$ distribution.  The goal is to estimate the three
parameters $\mathbf{\theta}=(\sigma,\sigma^*,\mu^*)$ from the data by
maximum likelihood.

The likelihood $L(\mathbf{\theta}|\mathbf{X})= \prod_{i=1}^m
L_i(\mathbf{\theta}|X_{i,1},\ldots,X_{i,n_i})$ is the product of
the group likelihoods.  Each of these terms can be derived by
differentiating (\ref{one.way.loglik}) with respect to
$x_1,\ldots,x_n$. The direct calculations are complicated, but
Property (1) of Shi (1995) gives recursions for the likelihood
function for the group in terms of certain coefficients $\{
q_{n,j} \}$.

The maximum likelihood algorithm has been implemented in S-Plus/R.
The estimation procedure numerically evaluates
$\ell(\mathbf{\theta}|\mathbf{X}) = \log
L(\mathbf{\theta}|\mathbf{X})$ and numerically maximizes it to
find the estimate of $\mathbf{\theta}$.  The search is initialized
at $\mathbf{\theta}_0 := (\sigma_0/2, \sigma_0, \mu_0)$, where
$\mu_0$ and $\sigma_0$ are estimates of the Gumbel parameters for
the (ungrouped) data set $\mathbf{X}$.  This estimate is found by
using the probability-weighted moment estimator, see e.g.
Section~1.7.6 of Kotz and Nadarajah (2000).

Usually this makes it straightforward to find maximum likelihood
estimates by numerical optimization. However, if a group is large
or $\alpha$ is small, the coefficients in the recursion can be
very large. E.g. the constant term in Shi's notation is
$$q_{n,0} = \left( \frac {n-1}{\alpha} -1 \right) \left( \frac
{n-2}{\alpha} -1 \right) \left( \frac {n-3}{\alpha} -1 \right)
\cdots \left( \frac {1}{\alpha} -1 \right).$$ In some cases this can
cause numerical overflow in the optimization routines. Further, if
all groups only have one value or if there is only one group then
parameters are not identifiable. Presumably parameter estimates will
be unreliable if data is close to these situations. This however was
not the case for the corrosion data in Section 4.3 below. Besides,
we made rather many simulations (not included in the paper) from
both random effects Gumbel model and independent Gumbel model with
arbitrary means, and checked on these simulations that the maximum
likelihood estimators perform reasonably well, as soon as there are
a few groups, and even when some of the groups are rather small.

In passing we note an alternative way to derive the likelihood,
which in addition indicates a possibility to compute it by
simulation. A group likelihood, conditional on $\tau$, is
$$ \prod_{j=1}
^{n} \, \frac{1}{\sigma} e^{-\frac{x_{j}-\mu
-\tau}{\sigma}} \exp \left\{ - e^{-\frac{x_{j}-\mu
-\tau}{\sigma}} \right\} = \frac{1}{\sigma^{n}} S^{n} e^{-\sum_{j=1}^{n}
\frac{x_{j}-\mu}{\sigma}} \exp \left\{ - S
\sum_{j=1}^{n} e^{-\frac{x_{j}-\mu}{\sigma}}\right\}, $$
where $\tau=\sigma \log S$ and $S$ is a standard $\alpha-$stable
variable, as previously. Hence, a  group likelihood is
$$ \frac{1}{\sigma^{n}} e^{-\sum_{j=1}^{n} \frac{x_{j}-\mu}{\sigma}} E\left[
S^{n} \exp \left\{ - S \sum_{j=1}^{n}
e^{-\frac{x_{j}-\mu}{\sigma}}\right\} \right]. $$
Let $\Delta = \sum_{j=1}^{n}e^{-(x_{j}-\mu)/\sigma}$.
Then, the
expectation in the last expression reduces to
$$ E\left[ S^{n} e^{ - S \Delta}\right]
=
E\left[ \frac{d^{n}}{d\Delta^{n}} \left\{e^{ - S \Delta}
\right\}\right]
= (-1)^{n}
\frac{d^{n}}{d\Delta^{n}}
\left\{e^{-\Delta^{\alpha}}\right\},
$$ where the second equality makes one more use of the stable
distribution of $S$.

\subsection{Estimation in the hidden MA(1) model}\label{subsect:ma(1)estimation}

By (\ref{eq:ma}) the hidden MA(1) model with constant location
parameter, $\mu_t=\mu$ and, for identifiability, $b_0=1, b_1=b$ has
distribution function
\begin{equation}
\label{eq:ma(1).df}
F=P(X_t \leq x_t, \; 1 \leq t \leq n) =
\exp\left(-\left\{(bz_1)^\alpha +\sum_{t=1}^{n-1}
(z_t +bz_{t+1})^\alpha + z_n^\alpha\right\}\right),
\end{equation}
where $z_t = \exp(-(x_t - \mu)/\sigma)$. The parameters of the
model are $\mathbf{\theta} =(\mu, b, \sigma,\alpha)$. By differentiation
with respect to $x_1, \dots,x_n$ the likelihood
function can be seen to be of the form
$$ L(\mathbf{\theta}|\mathbf{X}) = Q_{n}F \prod_{t=1}^n {z_t \over \sigma},$$
with F from (\ref{eq:ma(1).df}) and $Q_n$ defined recursively as follows.
Set $u_1 = b z_1$, $u_t = z_{t-1} + b z_t$ for $t=2,\ldots,n$, $u_{n+1}=z_n$.
Then $F = \exp(-\sum_{t=1}^{n+1} u_t^\alpha)$ and
\begin{eqnarray*}
Q_0&=&1, \hspace{1cm} Q_1= \alpha \left(b u_1^{\alpha-1}+u_2^{\alpha-1}\right), \\
Q_i&=& -Q_{i-2}\alpha(\alpha-1)b u_{i}^{\alpha-2}
      +Q_{i-1}\alpha \left(b u_{i}^{\alpha-1}+u_{i+1}^{\alpha-1}\right), \quad i=2, \dots, n.
\end{eqnarray*}
When $b=0$, the $Q_1$ term above should be interpreted as $Q_1= \alpha u_2^{\alpha-1}$,
which makes the likelihood formula valid in the case where the $x_t$ are independent.

Maximum likelihood estimation of the parameters $(\mu,b,\sigma,\alpha)$
has been implemented in S-Plus/R, where
$$\log \{L(\theta|\mathbf{X})\}= \log Q_n -\sum_{t=1}^{n+1} u_t^\alpha -
\sum_{t=1}^n \left( \frac {x_t-\mu} {\sigma} \right) - n \log
\sigma$$ is computed and numerically maximized.  As default the
search is started at $(\mu=\mu_0,b=0, \sigma=\sigma_0/0.5,
\alpha=0.5)$, where $(\mu_0,\sigma_0)$ are the Gumbel
probability-weighted moment estimators for the data set. In ad hoc
simulations to test this method, we sometimes observed that results
were sensitive to the choice of starting values when the sample size
was small. Apparently the likelihood surface has local maxima in
such cases. To deal with this problem, we started the search at
several different randomly chosen points and chose as estimator the
final values which gave the highest likelihood.

\subsection{Pitting corrosion data analysis}\label{subsect:dataanalysis}

The pitting corrosion investigation which generated this data set
was briefly mentioned in the beginning of Section
\ref{sect:mixtures}. Specifically, pieces (or ``test specimens'')
were cut out from different parts of the bottom hemflange of the
aluminum back door of a twelve year old station wagon. The corrosion
products were dissolved from the pieces, and the deepest corrosion
pit was measured in a number of one centimetre long test areas on
each specimen. The hemflange had been glued together and had also
been treated with a corrosion preventing coating. Surface areas
where the glue or coating was intact showed no corrosion. However,
in some places the glue and coating had not penetrated well or had
fallen of, leaving the surface exposed to corrosion. The proportion
of the area which could corrode varied between specimens, and this
was a potential cause of extra variation in the corrosion
measurements. These areas, however, had not been measured (and it
would have been difficult to do so) and there were  other
causes of extra variation, such as varying exposure to salt.

Interest was centered on the risk of penetration by the deepest
corrosion pit on the outer surface of the hemflange. The data set
for this surface consisted of microscope measurements (in microns)
of the maximum pit depth in 11 to 15 test areas on each of 12
specimens. There was no corrosion on 5 of the test specimens, and on
one specimen only two test areas showed any corrosion. These 6
specimens were excluded from our analysis. Also in the remaining
specimens there were some corrosion free test areas, and the data we
used for analysis hence consisted of 6 groups (=test specimens) with
varying numbers (ranging from 4 to 14) of measured maximum pit
depths.

The engineers who performed the experiment disregarded the group
structure and considered the pooled data set as an i.i.d Gumbel
sample. The maximum likelihood parameter estimates under this model
were $(\mu_{\mathrm{pool}}, \sigma_{\mathrm{pool}}) = (145.6, 69.4)$. It
was remarked by the engineers that there seemed to be some deviation
from a straight line in Gumbel plot, see Figure \ref{fig:poolgumbel}.
While the overall fit to the pooled data seems reasonable, there is clear group structure.

\begin{center}
Include Figure 7.1 here
\end{center}

 We instead modeled and analyzed the data as
 dependent 4 to 14-dimensional random vectors. We first made use of
the random effects Gumbel model from Subsection
\ref{subsect:onewayestimation}. The aim was both to see if this
model fitted better and to check wether it lead to a substantially
different risk estimate. In addition to the extra variation between
test specimens there might also be a short range dependence between
neighboring test areas. We tried to judge the size of short range
dependence by fitting a hidden MA(1) model on top of the random
effects model.

The maximum likelihood estimates in the random effects Gumbel model
were $(\mu, \sigma, \alpha) = (140.9, 54.1,0.716)$ with standard
deviations $(21.75,5.71,0.118)$ estimated from the inverse of the
empirical information matrix. A very rough calculation of the risk
of perforation can then be made as follows. There are about 15 test
specimens on a hemflange. Let us assume, as was the case with the
present data, that typically about 6 of the test specimens will show
corrosion and that on average about 11 test areas on each specimen
will be corroded. Then, by (\ref{one.way.loglik}) the estimated
distribution function of the maximum pit depth for one car would be
$$\hat{F}(x) = \exp(-6(11e^{-{ {x-140.9}\over 54.1}})^{0.716}).$$
The thickness of the aluminum was 1.1 mm = 1100 microns and hence we
estimate that there on the average will be perforation in one out of
$1/(1-\hat{F}(1100)) = 9671$ cars. A delta method 95\% confidence
interval for this estimate is $(8392,10950)$. If we instead,
following the engineering analysis, use the pooled Gumbel model with
the assumption that typically there are $6 \times 11=66$ corroded
test areas on a hemflange, the risk estimate is that on the average
there is penetration in one out of
$(1-\exp(-66e^{-\frac{1100-145.6}{69.4}})^{-1} \approx \allowbreak
14374$ cars. A delta method 95\% confidence interval is
$(13115,15632)$. Thus, the random effects model gave a practically
and statistically significantly different answer than the pooled
analysis.

The formulation as a random effects model gives a number of
possibilities for model checking.  From Figure \ref{fig:groupgumbel}
can be seen that the Gumbel distribution fits reasonably well to the
separate groups, that there indeed seems to be an extra variation
between groups, and that the fitted lines are approximately
parallel. As a  formal check on this, we made a conditional
analysis, fitting separate Gumbel distributions to the groups by
maximum likelihood. In this we considered three different models,
the first with separate $\mu$-s and $\sigma$-s for the groups, the
second with all groups assumed to have the same $\sigma$ but
different $\mu$'s for the different groups, and a third model with
the same $\sigma$ and the same $\mu$ for all observations. A
likelihood ratio test between the first two models gave $p=.53$, and
hence it seemed reasonable to assume the same $\sigma$ in all
groups, as is done in the random effects model. A test of the second
model against third lead to $p=2\cdot 10^{-6}$. Thus the pooled
model is rejected, while this analysis did not contradict the
validity of the random effects model.

As further checks on the random effects model, the $\sigma$ estimate
from the second model in the previous paragraph was $47.6$, which is
reasonably close to the $\sigma$ estimate $54.1$ in the random
effects model. Similarly, $\sigma^* = 75.6$ and
$\sigma_{\mathrm{pool}}=69.4$ are rather close, as they should be. A
further comparison is that the correlation coefficient estimated
nonparametrically from the data was $0.44$. This can be compared
with the correlation coefficient $1-\hat{\alpha}^2 = 0.49$ computed
from the fitted model.
\begin{center}
Include Figure 7.2 here
\end{center}
 Figure \ref{fig:qqmu.stable} shows the quantiles of the estimated $\mu$-s
against the quantiles of the fitted exponential-stable distribution.
According to the model, the $\mu$-s are exponential-stable, and
hence, apart from estimation error, the estimated $\mu$-s are
expected to be exponential-stable, so this plot is a diagnostic for
the fit of the mixing distribution. The plot  also shows a
reasonable fit, and in fact looks much like the same qq-plots from
simulated values from the model. Thus, neither of these model checks
indicated problems with the random effects model.

\begin{center}
Include Figure 7.3 here
\end{center}

As a final analysis we fitted the hidden MA(1) model from Subsection
\ref{subsect:ma(1)estimation} to the data, since there was a
possibility of extra dependence between neighboring test areas. In
this we assumed groups were independent and had their own $\mu$-s,
but that $\sigma, \alpha$ and $b$ were the same in all 6 groups.
Thus there were in all 9 parameters, the six group means
$\mu_1,\mu_2,\mu_3,\mu_4,\mu_5,\mu_6$ and the parameters
$\sigma,\alpha,b$. Maximum likelihood estimation using the default
initial values got stuck in a local maximum, and we hence did the
optimization for 100 different starting values for
$\sigma,\alpha,b$, chosen at random from the cube $[7,
54]\times[0.1, 0.99]\times[0, 2]$. As estimates we took the final
values which gave the highest likelihood. For the $\mu$-s in the 6
groups these were $87.3, 142.0, 132.4, 140.0,67.6,214.8$ and the
estimators for the remaining parameters were
$\hat{\sigma}=29.6,\hat{\alpha}=0.58,\hat{b}=0.13$.

From the model, the marginal distributions in the groups are Gumbel
with location parameter $\mu+\frac{\sigma}{\alpha}\log(1+b^\alpha)$
and scale parameter $\sigma/\alpha$. The estimates of these agreed
to within 5\% with their initial values, which indicated that these
parameters were reasonably well determined by the data. The
remaining two parameters, $\alpha$ and $b$, model the dependence
structure. The smaller the $\alpha$ and the closer $b$ is to one,
the higher is the dependence. These parameters seemed harder to
estimate. However, their estimated values indicated a rather weak
local dependence, and did not contradict the validity of the random
effects model.

We accordingly stopped the analysis at this point. If the dependence
had been judged important, we could have tried to fit a model which
included both random group means and a local MA(1) dependence.
Further model checking, as suggested by Crowder (1989, Section 3.3),
could be performed by using the probability integral transform
marginally to get uniform (but dependent) residuals or by computing
Rosenblatt residuals which are approximately independent if the
model is correct.

In summary: The pooled analysis did not fit the data and lead to
significantly different results than the random effects model.
Instead the random effects model seemed to give a good
representation of the data -- in particular none of the several
diagnostic checks indicated serious departures from it -- and we
believe it led to credible estimates. By way of further comment, it
may be noted that we obtained a successful fit of the hidden MA(1)
model, and that it produced useful information.

A weak point in the analysis is the assumption that a hemflange has
$6$ test specimens with $11$ corroded test areas each. Further the
variation in pit depths from car to car is not included in the data.
If measurements on several cars had been available, it would have
been natural to try to fit the hierarchical model from Section
\ref{sect:newclasses}.


\section{Some properties of the mixing distribution}\label{sect:properties}

This section discusses some of the basic facts about the models. In
the notation of Samorodnitsky and Taqqu (1994), the r.v. $S$ in
(\ref{stable}) is $S_{\alpha}( (\cos \pi \alpha/2)^{1/\alpha},1,0)$;
in the notation of Zolotarev (1986), $S \sim S_C(\alpha,1,1)$. It
has characteristic function
$$ E \exp(i t S) =
  \exp \left\{-\cos(\pi \alpha/2) |t|^{\alpha}
         \left[1-i \tan(\pi \alpha /2) (\sign t)\right] \right\}.  $$
Let $F_S(s)$ be the d.f. and $f_S(s)$ be the density of $S$. If $M
\sim $ExpS$(\alpha,\mu,\sigma)$, then the d.f. and density of $M$
are $F_M(x) = F_S[ \exp\{(x-\mu)/\sigma\} ]$ and $f_M(x) =
\exp\{(x-\mu)/\sigma\} f_S[ \exp\{(x-\mu)/\sigma\}]/\sigma$. Using
the programs for computing with stable distributions described in
Nolan (1997), it is possible to compute densities, d.f., quantiles
and simulate values for $M$. Figure~\ref{fig:expstable.pdf} shows
the density of some log-stable distributions. The densities all have
support $(-\infty,\infty)$ and appear to be unimodal. Note that as
$\alpha \uparrow 1$, $S$ converges in distribution to 1 and hence
$M=\log S$ converges in distribution to 0.

\begin{center}
{\it Include Figure 7.4 here}
\end{center}
It is well-known that the upper tail of $S$ is asymptotically
Pareto: as $x \rightarrow \infty$, $P(S > x) \sim c_\alpha
x^{-\alpha}$ where $c_\alpha=\Gamma(\alpha) \sin(\pi \alpha)/\pi$.
This implies that the right tail of $M
\sim$ExpS$(\alpha,\mu,\sigma)$ is asymptotically exponential: as $t
\rightarrow \infty$,
$$P(M > t) = P\left(S > \exp\left({t-\mu \over \sigma} \right)\right)
       \sim c_\alpha \exp\left(-{t-\mu \over \sigma/\alpha} \right).$$
The left tail of $S$ is light, see e.g. Section 2.5 of Zolotarev
(1986), so the left tail of $M$ is even lighter. Thus all moments of
$M$ exist; in particular, using the results of Section 3.6 of
Zolotarev (1986),
$$\mathrm{E}(M) = \mu + \sigma \gamma_{Euler} \left( {1 \over \alpha} - 1 \right), \quad
\mathrm{Var}(M) = {\pi^2 \sigma^2 \over 6} \left( {1 \over \alpha^2}
-1 \right),$$ where $\gamma_{Euler} \approx 0.57721$ is Euler's
constant.

As a simple consequence we derive the correlation between two
variables in the same group in the random effects model
(\ref{eq.oneway.cdf}). Suppose $X_i= \mu + \tau + G_i, \;i=1,2$ with
$\tau \sim$ExpS$(\alpha,0,\sigma), \; G_i \sim$Gumbel$(0,\sigma)$
and the three variables independent. Then $\mathrm{Cov} (X_1, X_2) =
\mathrm{Var}(\tau)$ and $\mathrm{Var}(X_i) =
\mathrm{Var}(\tau)+\mathrm{Var}(G_i)$. Since
$\mathrm{Var}(G_i)={\pi^2 \sigma^2 \over 6}$ we obtain that
$\mathrm{Cor}(X_1, X_2)=1-\alpha^2$, which varies from 0 in the
independent case $\alpha =1$ to 1 as $\alpha \to 0$, which is
reasonable since the limit corresponds to full dependence.


\section{Mixtures of generalized extreme value distributions}\label{sect:evmixtures}

The mixture models for the Gumbel distribution discussed so far in
the paper carry over to the (generalized) EV distribution in a
straightforward manner. However, the interpretation (i) is
different.

The EV distribution has d.f. $\exp(-(1+\gamma {{x-\mu} \over
\sigma})^{-1/\gamma})$ with parameters $\mu, \gamma \in R$ and
$\sigma>0$.  For positive $\gamma$ this distribution has a finite left
endpoint $\delta = \mu-\sigma/\gamma$ and for $\gamma$ negative it has a finite
right endpoint $\delta = \mu+\sigma/|\gamma|$.
In analogy with (\ref{stable}) - (\ref{xunconditional}) let
$S$ be positive stable with Laplace transform (\ref{stable}) and
assume that
\begin{equation}
\label{EVconditional}
P(X \leq x|S) = \exp[-S(1+\gamma {{x-\mu} \over \sigma})^{-1/\gamma}]
= \exp[-(\gamma {{x-\delta} \over S^\gamma \sigma})^{-1/\gamma}].
\end{equation}
Then by (\ref{stable}),
\begin{equation}
\label{EVunconditional}
P(X \leq x)= \exp\left[-\left\{1+(\gamma/\alpha) {{x-\mu} \over
(\sigma/\alpha)}\right\}^{-1/(\gamma/\alpha)}\right].
\end{equation}

Thus, in the terminology of (ii) of Section \ref{sect:mixtures}, if
$X$ is a positive stable size mixture of an EV distribution with
location $\mu$, scale $\sigma$ and shape parameter $\gamma$ then also
X itself has an EV distribution with the same location $\mu$ and the
same right endpoint $\delta$, but with a new scale parameter $\sigma/\alpha$
and new shape parameter $\gamma/\alpha$. Hence in particular the
unconditional distribution of $X$ has heavier tails than the
conditional one.

The physical motivations (ii) and (iii) from Section
\ref{sect:mixtures} carry over to the present situation without
change.  Further, from (\ref{EVconditional}) it can be seen that $X$
may be obtained as a special random location-scale transformation of
an EV distribution. Specifically, if $E$ has an EV distribution with
parameters $\mu, \sigma, \gamma$ and $S$ is positive $\alpha$-stable
and independent of $E$, then $X$ may be represented as
\begin{equation}
\label{eq:scalemixture}
X = S^\gamma E + (1-S^\gamma)\delta.
\end{equation}
 Thus $X$ is obtained as a scale mixture with mixing distribution
$S^\gamma$, but in addition there is an accompanying location change
which is tailored to keep the endpoint of the distribution
unchanged. This, of course, may be the most natural way to make scale
mixtures of distributions with finite endpoints.

With this change, the motivations from Section \ref{sect:mixtures} and
the models from Section \ref{sect:newclasses} carry over to the EV distribution.  If the
models in Section \ref{sect:newclasses} are written as size mixtures, i.e. in the form
(ii), the only changes needed to go from Gumbel to EV are to replace
$e^{- \frac{x-\mu}{\sigma}}$ by $(1+\gamma {{x-\mu} \over
\sigma})^{-1/\gamma}$ in all expressions.  The recursions for the
likelihood functions from Section 5 translate to the EV case
similarly.

It is also straightforward to translate specifications using (i) to
the EV case. E.g, in the formulation (i) the random effects model
(\ref{eq.oneway.cdf}) becomes
$$
X_{i, j} = S_i^\gamma E_{i,j}  + (1-S_i^\gamma)\delta,
$$ where $E_{i, j}$ has an EV distribution with parameters $\mu,
\sigma, \gamma$ and $S_i$  positive $\alpha$-stable, and all variables
are mutually independent. In the same way, the hidden time series model
(\ref{linear}) in EV form can be written as
$$
X_t = H_t^\gamma E_t  + (1-H_t^\gamma)\delta,
$$
with $H_t$ a linear stable process and $E_t$ is EV distributed, and all
variables are mutually independent.

Next,
$$
\log(X-\delta) = \gamma \log S + \log (E-\delta),
$$
and if $X$ is of the form (\ref{eq:scalemixture}) with $\gamma>0$ then
$\log(E-\delta)$ has a Gumbel distribution with location
parameter $\log (\sigma/\mu)$ and scale parameter $\gamma$. For
$\gamma <0$ we instead write
$$
\log(\delta-X) = \gamma \log S + \log (\delta-E),
$$
where $\log (\delta-E)$ has a Gumbel distribution with location
parameter $\log (\sigma/\mu)$ and scale parameter $\gamma$. Thus the
diagnostic plots for Gumbel mixtures could be used also for EV
mixtures, except that $\delta$ isn't known. A pragmatic way to
control the model assumptions then is to replace $\delta$ by some suitable
estimate.


\section{discussion}\label{sect:discussion}

The pitting corrosion example discussed in Section 4 was the
starting point for the present research. There it seemed important
to use models where marginal, conditional and unconditional
distributions, and maxima over blocks of varying sizes all had
Gumbel distributions, since this leads to simple and understandable
results, and credible extrapolation into extreme tails.

However it seems important to stay within the extreme value
framework throughout for many other applications too. This is a main
reason for the present work. Another is that our results open up a
wide spectrum of hitherto unavailable possibilities to
construct extreme value models for complex observation structures,
in particular for time series and spatial extreme value data.

The results also throw new light on some much studied logistic
models. In particular they point to possibilities for new kinds of
model diagnostics. In addition they show how one can carry over many
of the analyses available for normal models to an extreme value
framework in a simple and intuitive way. One example of how this can
be done is the suggested next step in the analysis of the corrosion
data, to fit a model which includes both random group means and a
MA(1) dependence.

We believe that many applications of these ideas remain to be explored.
One aim of this paper is to provide a solid basis
for such future research.

\vspace*{5mm} \noindent {\bf Acknowledgement:} We want to thank
Sture Holm for many stimulating discussions and ideas. We also want
to thank two anonymous referees for very useful remarks.
Acknowledgements to R-project and programs.  Research supported by
VCC/Ford, the Wallenberg foundation, and the Swedish Foundation for
Strategic Research.

\noindent Corresponding author: Anne-Laure \ Foug\`eres, { \'Equipe
Modal'X, B\^at. G, Universit\'e Paris X - Nanterre, 200 av. de la
R\'epublique, F-92000 Nanterre, France. Email:
Anne-Laure.Fougeres@u-paris10.fr}

\newpage

\begin{figure}
\begin{center}
\includegraphics[width=\textwidth]{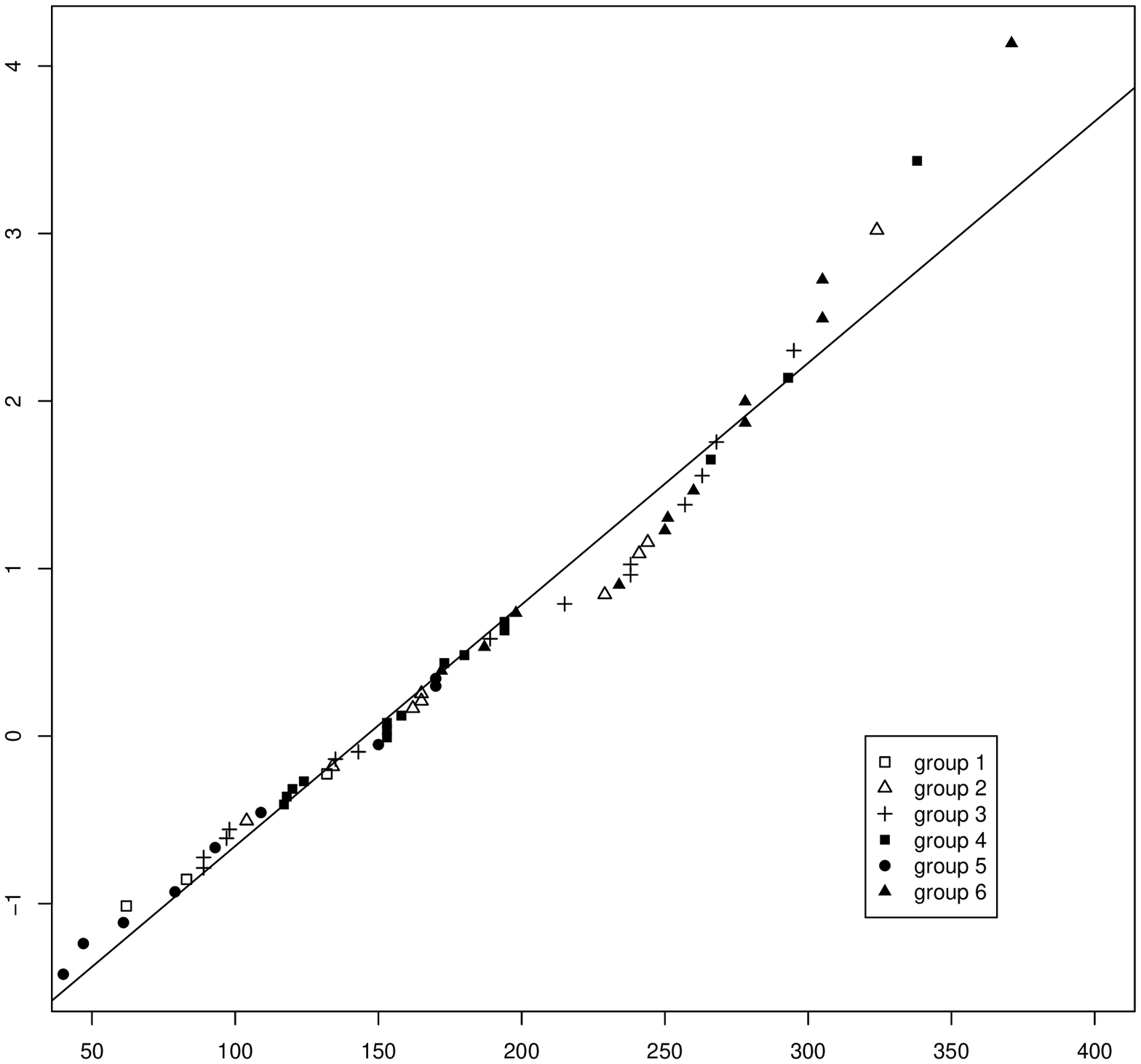}
\end{center}
\caption{Gumbel plot for the pooled corrosion measurements. A
different symbol is used for each group.} \label{fig:poolgumbel}
\end{figure}

\newpage

\begin{figure}
\begin{center}
\includegraphics[width=\textwidth]{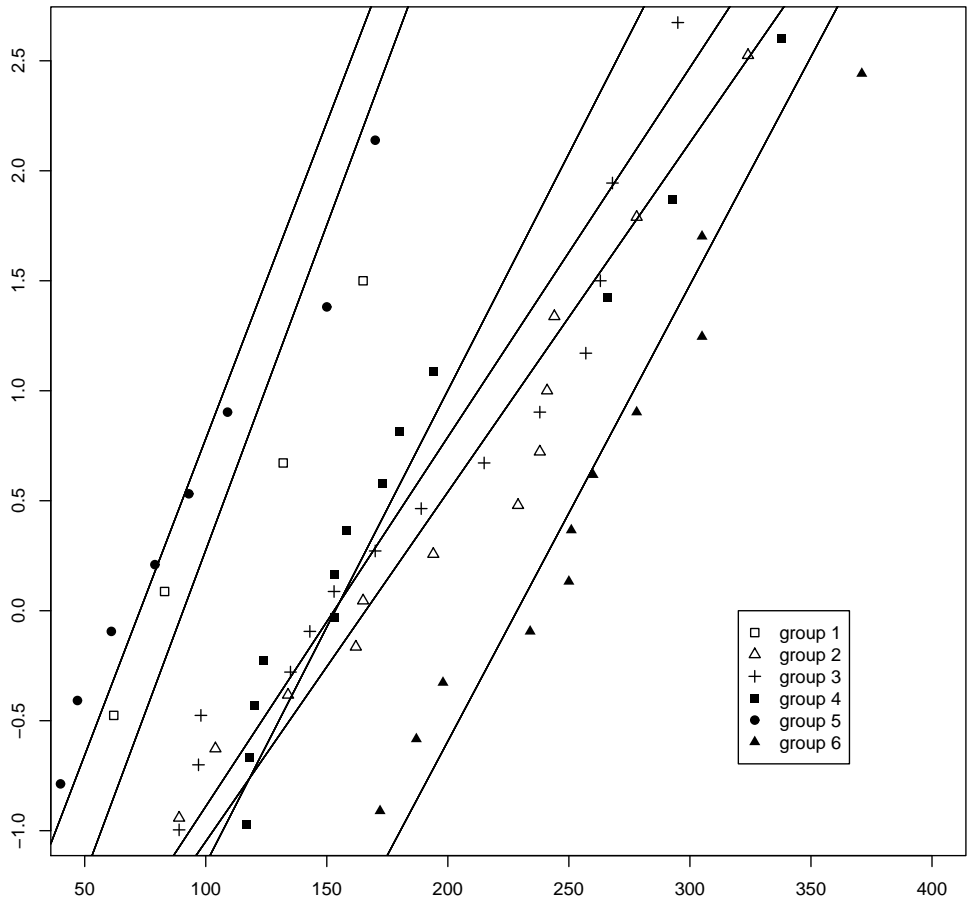}
\end{center}
\caption{Gumbel plots made separately for the 6 groups. The solid
lines are the different theoretical Gumbel fits for each group.}
\label{fig:groupgumbel}
\end{figure}

\newpage

\begin{figure}
\begin{center}
\includegraphics[width=\textwidth]{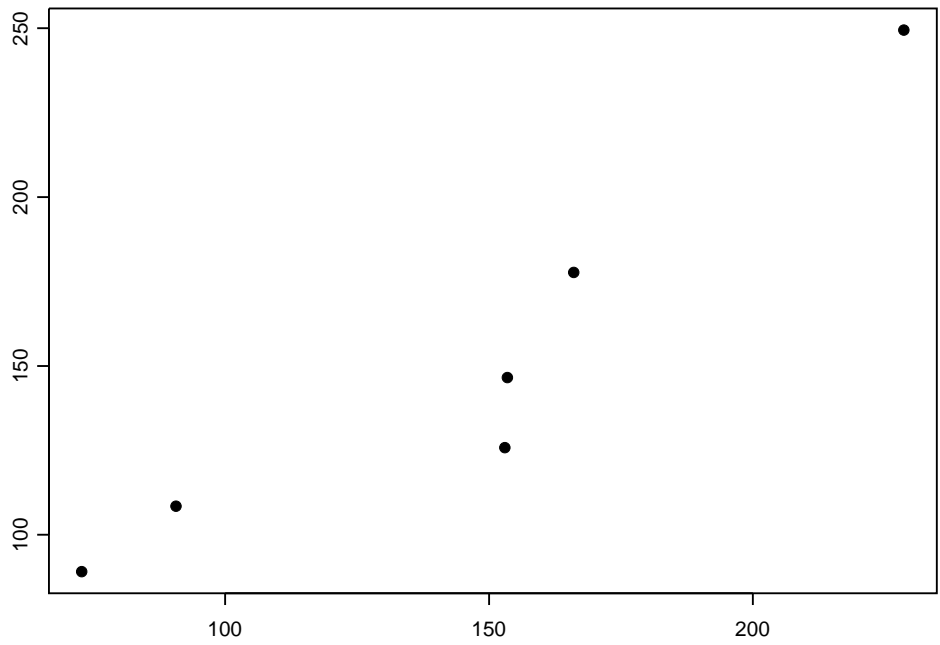}
\end{center}
\caption{qq-plot of fitted exponential-stable distribution against
estimated $\mu$-s from the conditional analysis with the same
$\sigma$ in all groups.} \label{fig:qqmu.stable}
\end{figure}

\newpage

\begin{figure}
\begin{center}
\includegraphics[width=\textwidth]{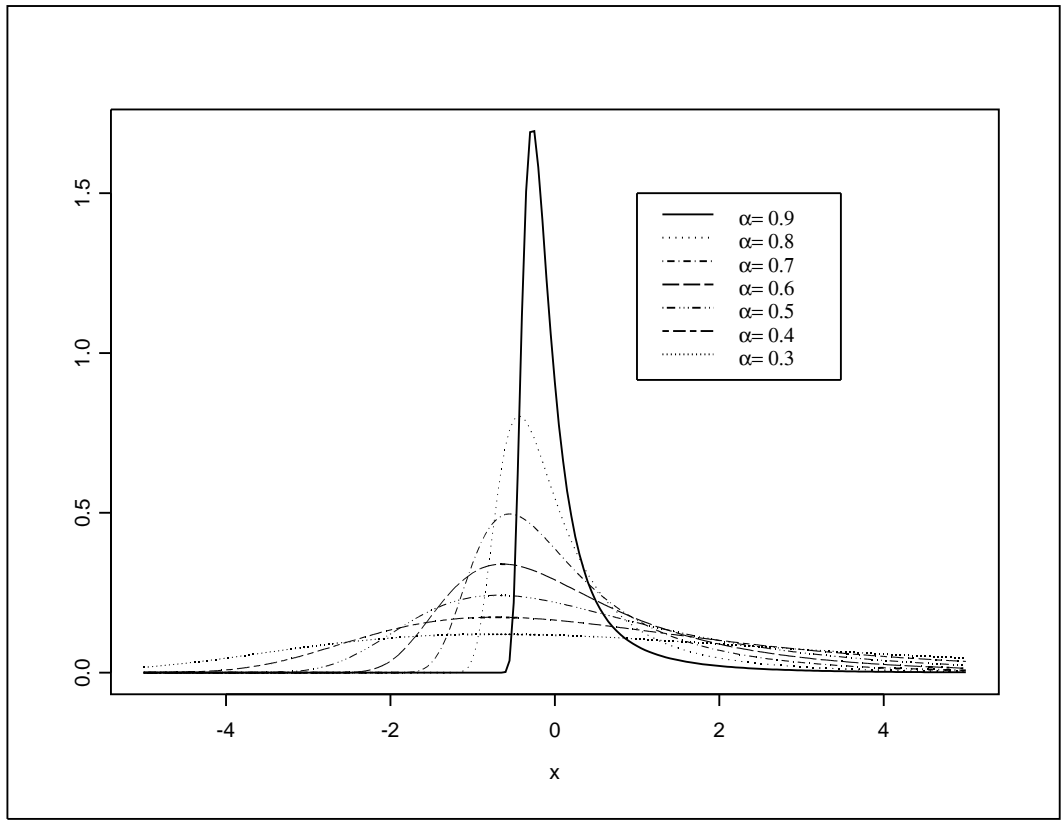}
\end{center}
\caption{Plot of densities of standardized exponential-stable
distributions ExpS$(\alpha,0,1)$, with varying $\alpha$.}
\label{fig:expstable.pdf}
\end{figure}

\end{document}